\documentclass[conference]{IEEEtran}
\IEEEoverridecommandlockouts

\usepackage{dsfont}

\def\BibTeX{{\rm B\kern-.05em{\sc i\kern-.025em b}\kern-.08em
    T\kern-.1667em\lower.7ex\hbox{E}\kern-.125emX}}

\usepackage{cite}
\ifCLASSINFOpdf
  \usepackage[pdftex]{graphicx}
  \usepackage{amsfonts}
  \usepackage{xcolor}

  \usepackage{bbm}
  \usepackage{graphicx}
  \usepackage{makecell}
  \usepackage{hhline}
  \graphicspath{{figures/}}

  \usepackage{booktabs} 
  \usepackage{array}
\usepackage{subcaption}
  \usepackage{amsmath,amssymb,amsthm}
  \usepackage[colorlinks=true, linkcolor=black, citecolor=black, urlcolor=black]{hyperref}

  \newtheorem{proposition}{Proposition}[section]

\else
\fi

\begin{document}

\bstctlcite{IEEEexample:BSTcontrol}
% paper title
\title{Reliable Narrowband Interference Detection via Backward Conformal Prediction}

\author{
	\IEEEauthorblockN{Xin~Su\textsuperscript{*}, Meiyi~Zhu\textsuperscript{\dag}, Osvaldo~Simeone\textsuperscript{\ddag}, Marco~Di~Renzo\textsuperscript{\dag}, and Carlo~Fischione\textsuperscript{*}}
	\IEEEauthorblockA{\textsuperscript{*}KTH Royal Institute of Technology, Stockholm, Sweden}
	\IEEEauthorblockA{\textsuperscript{\dag}King's College London, London, UK}
	\IEEEauthorblockA{\textsuperscript{\ddag}Northeastern University London, London, UK}
	\vspace{-32pt} 
    
\thanks{Xin Su and Carlo Fischione are with the School of Electrical Engineering and Computer Science, KTH Royal Institute of Technology, 10044 Stockholm, Sweden (e-mail: xisu@kth.se; carlofi@kth.se).
Meiyi Zhu and Marco Di Renzo are with the Department of Engineering, Centre for Telecommunications Research, King's College London, London WC2R 2LS, U.K. (e-mail: meiyi.1.zhu@kcl.ac.uk; marco.di\_renzo@kcl.ac.uk).
Osvaldo Simeone is with the Institute for Intelligent Networked Systems, Northeastern University London, London E1 8PH, U.K. (e-mail: o.simeone@northeastern.edu).
Marco Di Renzo is also with CNRS and CentraleSup\'elec, Institute of Electronics and Digital Technologies (IETR), 35576 Cesson-S\'evign\'e, France (e-mail: marco.direnzo@centralesupelec.fr). The work of X. Su and C. Fischione was sponsored by the KTH DF research center and by the SSF SAICOM project. The work of O. Simeone was supported by the European Research Council (ERC) under the European Union’s Horizon Europe Programme (grant agreement No. 101198347), by an Open Fellowship of the EPSRC (EP/W024101/1), and by the EPSRC project (EP/X011852/1).}
}

% make the title area
\maketitle

% As a general rule, do not put math, special symbols or citations
% in the abstract or keywords.
\begin{abstract}
Narrowband interference can severely degrade the performance of WiFi links by concentrating significant power on a small portion of the channel. Machine learning (ML) detectors trained on baseband I/Q samples can identify the affected subcarriers with high accuracy, surpassing model-based detectors that rely on hand-crafted statistics. The predictive probabilities produced by such detectors are, however, typically poorly calibrated, and downstream mitigation modules generally operate under strict resource budgets that limit the number of candidate interference states that can be acted upon. Conformal prediction (CP) provides a distribution-free framework for constructing prediction sets that control the probability of excluding the true output, i.e., the miscoverage level, at a prescribed level. However, this target miscoverage level must be fixed in advance, while the resulting prediction-set size remains uncontrolled, which is misaligned with operationally constrained settings. To address this issue, we develop a backward conformal prediction (BCP) framework in which the prediction-set size is fixed by the operational budget and the corresponding per-input miscoverage level is estimated from calibration data with provable reliability guarantees. We instantiate the framework for narrowband interference detection in WiFi systems and show through simulations that BCP yields reliable miscoverage estimates whose accuracy approaches that of an uncalibrated baseline as the calibration set grows.
\end{abstract}

% Note that keywords are not normally used for peerreview papers.
\begin{IEEEkeywords}
Backward conformal prediction, reliability, interference detection, e-values.
\end{IEEEkeywords}

\IEEEpeerreviewmaketitle

\section{Introduction}
\label{sec:intro}

Narrowband interference (NBI) is a pervasive impairment in wireless communications. Although localized in the frequency domain, NBI concentrates significant power on a small number of subcarriers and can substantially degrade legitimate WiFi links~\cite{aygur2025narrowband}. Detecting whether NBI is present, and, if so, identifying which subcarriers are affected, is therefore a key prerequisite for effective mitigation.

Conventional NBI detectors rely on energy or correlation statistics and require accurate models of the legitimate signal and of the interferer~\cite{hadaschik2007joint,gonzalez2005narrowband}, which are difficult to obtain in practice under unknown waveforms and time-varying channels. Machine learning (ML) detectors offer a flexible alternative: by learning discriminative features directly from the received measurements, e.g., in-phase and quadrature (I/Q) samples, deep models achieve high detection and classification accuracy across a wide range of operating conditions~\cite{robinson2023narrowband,andersson2024deep,hu2024narrowband}, with recent work further improving robustness to distributional mismatch~\cite{xiao2025robustness}, addressing low-power interference~\cite{jia2025low}, and enhancing interpretability through explainable AI~\cite{hinkley2025quantifying}. ML-based wireless components are now increasingly considered for deployment in real systems~\cite{letaief2019roadmap}.

\begin{figure}[t]
    \centering
    \includegraphics[width=\linewidth]{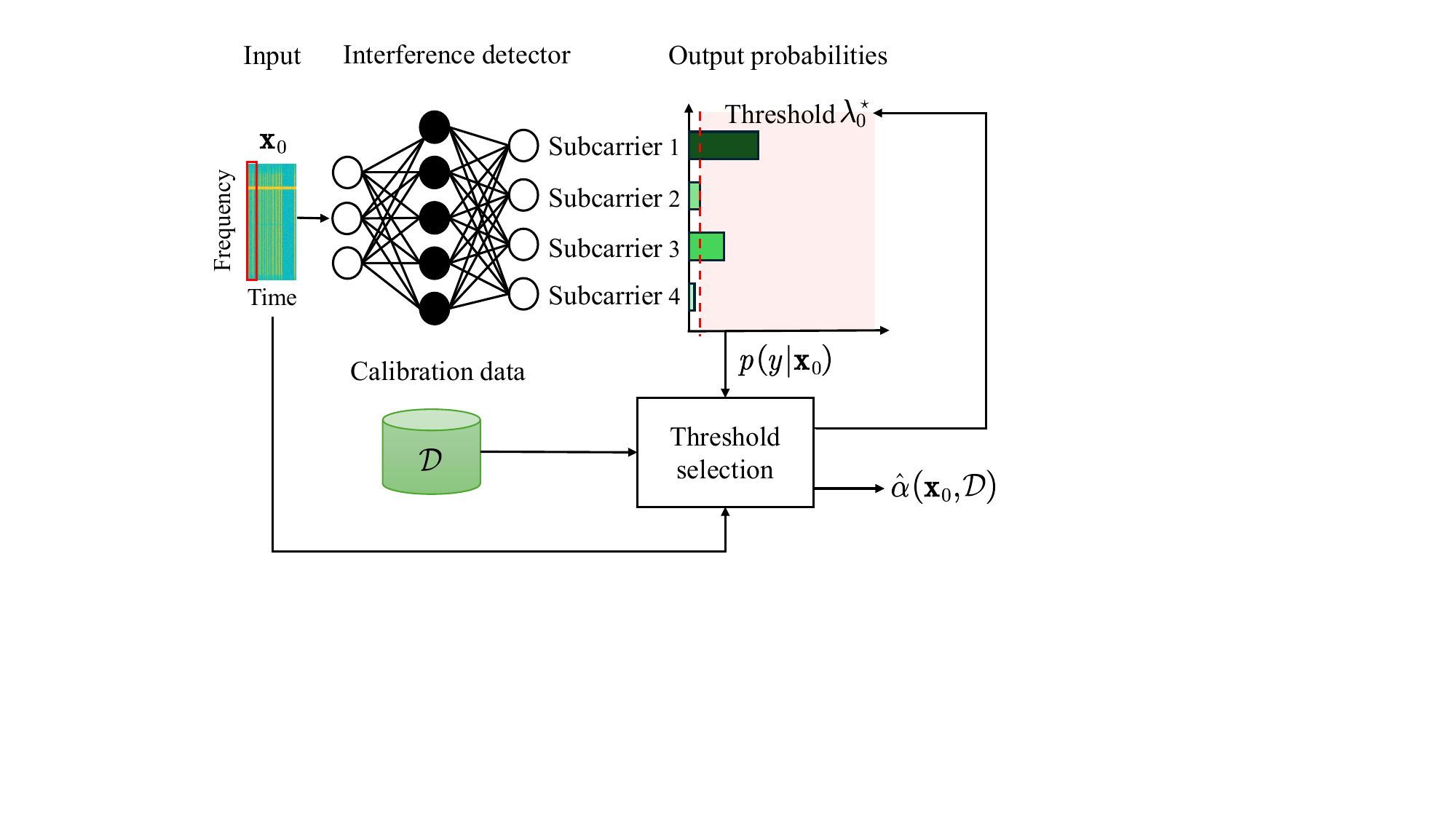}
    \caption{A WiFi link may be affected by narrowband interference on at most one subcarrier~\cite{robinson2023narrowband,grover2014jamming}. A pre-trained model processes the received I/Q data $\mathbf{x}_0$ and outputs predictive probabilities $p(y|\mathbf{x}_0)$ over the candidate labels $y$, which consist of no transmission, WiFi only, and the $S$ candidate subcarriers possibly affected by interference. Using a calibration dataset $\mathcal{D}$, for a predefined top-$K$ set predictor, we wish to produce a reliable estimate $\hat{\alpha}(\mathbf{x}_0,\mathcal{D})$ of the probability that the set does not include the true label.}
    \label{fig:set_id}
\end{figure}

Beyond raw accuracy, deploying ML detectors in quality- or safety-critical settings calls for an assessment of the \emph{reliability} of their outputs. A natural approach is to leverage directly the self-reported softmax probabilities of the underlying model. As is well known, however, modern deep neural networks are typically poorly calibrated, and overconfidence is especially common~\cite{guo2017calibration,zecchin2023robust}. \emph{Conformal prediction} (CP) provides a principled remedy by returning prediction sets with distribution-free guarantees on the miscoverage level, i.e., on the probability of excluding the true label~\cite{vovk2005algorithmic,angelopoulos2021gentle}. This property makes CP particularly attractive for wireless systems, where accurate statistical modeling is generally elusive, and recent works have leveraged CP for tasks such as demodulation~\cite{cohen2023calibratingc}, beamforming~\cite{su2025conformal}, distributed inference~\cite{zhu2024federated}, and network performance analysis~\cite{hou2025what}.

A key limitation of standard CP is that one must prescribe the target miscoverage level of the prediction set in advance, while the resulting prediction-set size is left uncontrolled. A lower target miscoverage generally improves reliability by including more candidate states, thereby increasing the set size. This is poorly aligned with many wireless deployments, where downstream modules can act on only a limited number of candidate states, dictated by operational constraints such as compute, memory, or latency budgets. In such settings, the question of interest is reversed: given a prediction set whose size is fixed by an operational budget, what is the probability that it fails to contain the true label? 

The recently introduced \emph{backward conformal prediction} (BCP) framework addresses precisely this question. Building on e-values as test statistics~\cite{chugg2026values,balinsky2024enhancing}, BCP fixes the set rule and produces a per-input estimate of the corresponding miscoverage level with provable reliability guarantees~\cite{gauthier2025backward,gauthier2025values,koning2023post}.

In this paper, we leverage BCP to assess the reliability of operationally constrained prediction sets for NBI detection. Unlike prior work on NBI detection, which focuses on improving classification accuracy or robustness~\cite{robinson2023narrowband,andersson2024deep,hu2024narrowband,xiao2025robustness,jia2025low,hinkley2025quantifying}, we take a pre-trained probabilistic detector as given and quantify the risk that its budget-constrained output excludes the true interference state. The main contributions are as follows:

\begin{itemize}

    \item We formulate an operationally constrained, generalized top-$K$ prediction-set framework for NBI detection, in which different candidate states may carry different mitigation costs.

    \item We develop a BCP-based per-input miscoverage estimator  for the resulting constrained prediction sets in closed form, with a proof of its distribution-free reliability guarantee.

    \item Through simulations on IEEE~802.11a/g signals, we show that the proposed estimator is conservative across a range of operating conditions, while its Brier score approaches that of an uncalibrated baseline as the calibration size grows.

\end{itemize}

The remainder of the paper is organized as follows. Section~\ref{sec:prob_def} formulates the constrained prediction-set problem for NBI detection. Section~\ref{sec:bcp_proc} introduces a na\"{\i}ve baseline and develops the BCP-based miscoverage estimator. Section~\ref{sec:numerical_results} reports numerical results, and Section~\ref{sec:conclusion} concludes the paper.

\section{System Model and Problem Definition}
\label{sec:prob_def}

%-------------------------------------------------------------
\subsection{System Model}
\label{subsec:sys_model}

We consider a WiFi system operating in the presence of NBI~\cite{robinson2023narrowband}. Following~\cite{robinson2023narrowband}, the receiver collects baseband I/Q samples over an observation window of length $M$, yielding the input vector $\mathbf{x}\in\mathbb{C}^{M}$. Based on $\mathbf{x}$, the receiver must determine whether NBI is present and, if so, identify the affected subcarrier. We monitor $S$ subcarriers and, as in~\cite{robinson2023narrowband}, assume that NBI affects at most one of them. The detection output $y$ takes one of $S+2$ values: no WiFi transmission, legitimate WiFi transmission only, or WiFi transmission plus interference on one of the $S$ subcarriers. The label space is therefore
\begin{align}
\label{eq:label_space}
    \mathcal{Y}=\{\mathrm{no~transmission},\,\mathrm{WiFi~only},\, 1,\ldots,S\},
\end{align}
where the last $S$ labels indicate the interfered subcarrier.

A pre-trained probabilistic detector maps the input $\mathbf{x}$ to a predictive distribution $p(y|\mathbf{x})$ over $\mathcal{Y}$. In a conventional point-estimation approach, the receiver returns the most likely label
\begin{align}
\label{eq:point_est}
    \hat{y}=\arg\max_{y\in\mathcal{Y}}\;p(y|\mathbf{x}).
\end{align}
As discussed in Sec.~\ref{sec:intro}, downstream mitigation may require retaining multiple plausible interference states, rather than only the most likely label in~\eqref{eq:point_est}. Since only a limited number of states can be acted upon under the available mitigation budget, we next consider a generalized top-$K$ prediction set~\cite{cohen2023calibratingc,cohen2023calibratingj}.

\begin{figure*}[t]
    \centering
    \includegraphics[width=\linewidth]{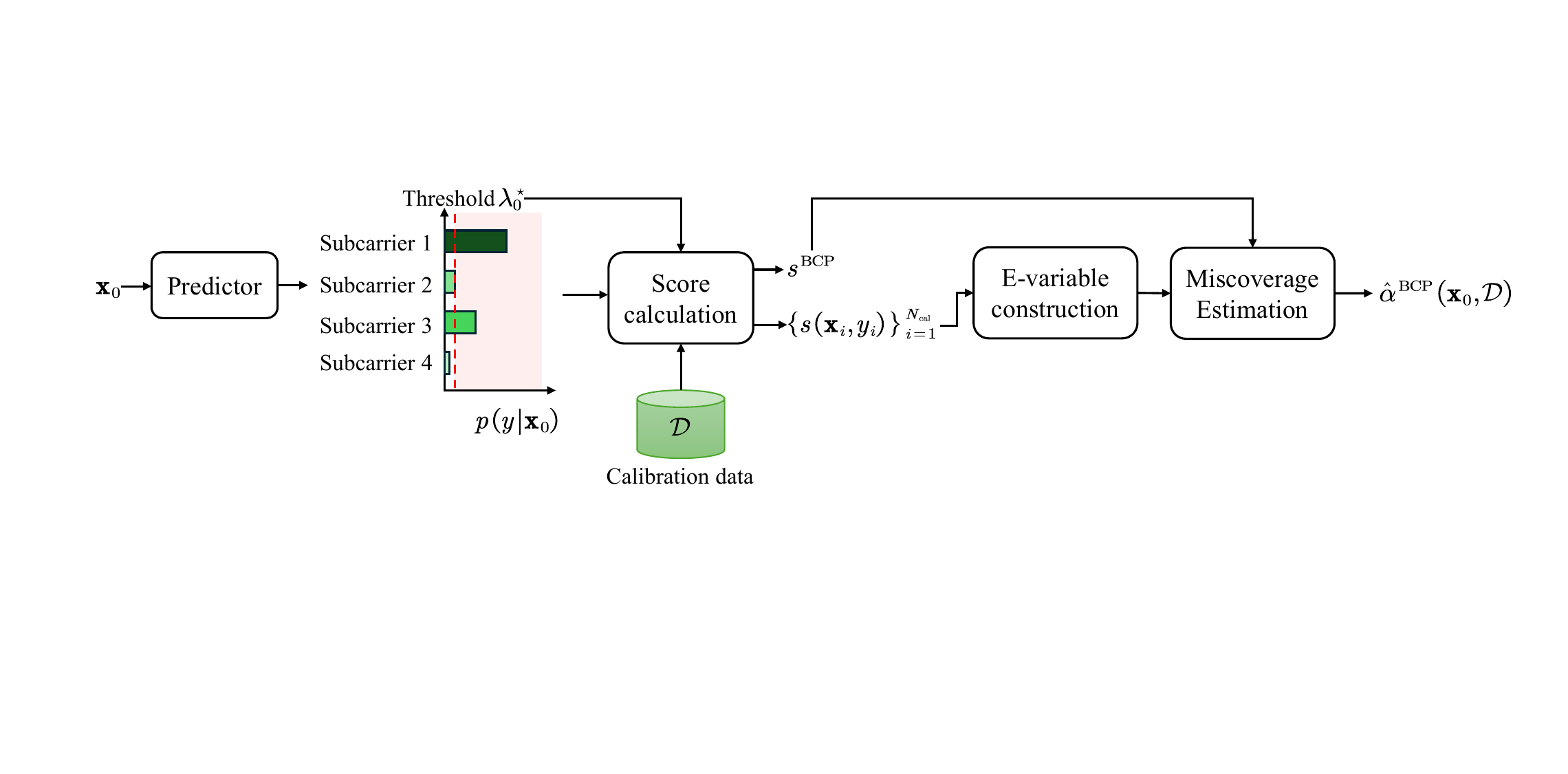}
    \caption{Given the predictive probabilities $p(y|\mathbf{x}_0)$ and the calibration data $\mathcal{D}$, BCP defines a nonconformity score $s(\mathbf{x},y)$ and produces a miscoverage estimate $\hat{\alpha}(\mathbf{x}_0,\mathcal{D})$ via~\eqref{eq:alpha_hat_closed_form}, which satisfies the reliability condition in~\eqref{eq:cvg_guar_e}.}
    \label{fig:bcp_proc}
\end{figure*}

\subsection{Generalized Top-\textit{K} Prediction Sets}
\label{subsec:pred_set}

As illustrated in Fig.~\ref{fig:set_id}, given a predictive distribution $p(y|\mathbf{x})$, a prediction set is typically formed by retaining all labels whose predictive probabilities exceed a threshold $\lambda\in(0,1)$, i.e.,
\begin{align}
\label{eq:naive_threshold_set}
    \mathcal{C}(\mathbf{x};\lambda)= \bigl\{ y\in\mathcal{Y}:p(y| \mathbf{x})> \lambda \bigr\}.
\end{align}
The set $\mathcal{C}(\mathbf{x};\lambda)$ guides downstream NBI mitigation: the no-transmission and WiFi-only labels are interference-free by definition and require no action, whereas any outcome $y=s$ with $s\in\{1,\ldots,S\}$ calls for mitigation, which consumes resources at the receiver. To capture this asymmetry, each label $y$ is assigned a mitigation cost $c_y\in[0,1]$, with $c_y>0$ for $y\in\{1,\ldots,S\}$ and $c_y=0$ otherwise.

For input $\mathbf{x}$, order the labels as $p(y^{(1)}|\mathbf{x})\ge p(y^{(2)}|\mathbf{x})\ge \cdots\ge p(y^{(S+2)}|\mathbf{x})$, with ties broken arbitrarily. A generalized top-$K$ set predictor is obtained by choosing the smallest threshold in~\eqref{eq:naive_threshold_set} that respects an operational budget $K$, namely
\begin{equation}
  \label{eq:threshold}
    \lambda^{\star}=\underset{\lambda}{\mathrm{arg\,min}}\left\{ \lambda:\left| \mathcal{C}(\mathbf{x};\lambda)\right|\le C_{\max}(\mathbf{x})\right\},
\end{equation}
where the budget-feasible maximum set size is
\begin{equation}
    C_{\max}(\mathbf{x})\!=\!\max\left\{\!m\in\!\{0,1,\ldots,S+2\}\!:\!\sum_{s=1}^{m} c_{y^{(s)}}\le K\!\right\}.
    \label{eq:Cmax_intf}
\end{equation}
In words, $C_{\max}(\mathbf{x})$ in \eqref{eq:Cmax_intf} is the maximum number of most likely labels that can be retained within the mitigation budget $K$.

\subsection{Problem Statement}
\label{subsec:prob_stmt}
Let $(\mathbf{x}_0,y_0)$ denote the test pair, where $\mathbf{x}_0$ is the received I/Q vector during inference and $y_0$ is the corresponding unknown true label.
We assume the availability of a calibration dataset $\mathcal{D}=\{(\mathbf{x}_i,y_i)\}_{i=1}^{N_{\mathrm{cal}}}$, where each pair consists of a received I/Q vector $\mathbf{x}_i$ and the corresponding interference label $y_i$. The calibration dataset and the test pair $(\mathbf{x}_0,y_0)$ are drawn i.i.d.\ from the same unknown distribution $P(\mathbf{x},y)$.

For the test input $\mathbf{x}_0$, let $\lambda_0^{\star}$ be the corresponding threshold selected according to~\eqref{eq:threshold}. The event that the resulting generalized top-$K$ prediction set $\mathcal{C}(\mathbf{x}_0;\lambda_0^{\star})$, does not contain the true label $y_0$, is called a \textit{miscoverage event}, and is formally defined as $\mathds{1}(y_0 \notin \mathcal{C}(\mathbf{x}_0;\lambda_0^{\star}))$, where $\mathds{1}$ denotes the indicator function ($\mathds{1} \left( \mathrm{true} \right) =1
$ and $\mathds{1} \left( \mathrm{false} \right) =0$). 
 Since the occurrence of the miscoverage event is unknown during inference, our goal is to construct, from the test input $\mathbf{x}_0$ and the calibration dataset $\mathcal{D}$, a per-input \emph{miscoverage estimate} $\hat{\alpha}(\mathbf{x}_0,\mathcal{D})$ of the corresponding miscoverage probability $\mathrm{Pr} \left(y_0\notin \mathcal{C}(\mathbf{x}_0;\lambda_0^{\star})|\mathbf{x}_0\right)$ that is correct on average. 

Specifically, to support reliable downstream decision-making, the estimate $\hat{\alpha}(\mathbf{x}_0,\mathcal{D})$ should provide a conservative assessment of the true miscoverage. Following~\cite{gauthier2025backward}, we formalize this requirement through the reliability condition
\begin{align}
\label{eq:cvg_guar_e}
\mathbb{E} \left[ \frac{\mathds{1} \left(y_0\notin \mathcal{C}(\mathbf{x}_0;\lambda_0^{\star})\right)}{\hat{\alpha}\left( \mathbf{x}_0,\mathcal{D} \right)} \right] \le 1,
\end{align}
where the expectation is over both the calibration data $\mathcal{D}$ and the test pair $(\mathbf{x}_0,y_0)$. Condition~\eqref{eq:cvg_guar_e} penalizes underestimation: a smaller $\hat{\alpha}(\mathbf{x}_0,\mathcal{D})$ incurs a larger penalty whenever a miscoverage event occurs.

When the estimate $\hat{\alpha}(\mathbf{x}_0,\mathcal{D})$ is well concentrated around its mean, condition~\eqref{eq:cvg_guar_e} admits a more interpretable approximate form. Specifically, a first-order Taylor expansion of $1/\hat{\alpha}(\mathbf{x}_0,\mathcal{D})$ around $1/\mathbb{E}[\hat{\alpha}(\mathbf{x}_0,\mathcal{D})]$ yields the approximate inequality~\cite{gauthier2025values,koning2023post}
\begin{align}
\label{eq:cvg_guar}
\mathrm{Pr} \left(y_0\notin \mathcal{C}(\mathbf{x}_0;\lambda_0^{\star})\right)
\lesssim
\mathbb{E}\left[\hat{\alpha}\left(\mathbf{x}_0,\mathcal{D}\right)\right],
\end{align}
where the probability and expectation are taken over the calibration data $\mathcal{D}$ and the test pair $(\mathbf{x}_0,y_0)$, indicating  that the average estimated miscoverage serves as an approximate upper bound on the marginal miscoverage probability.
Accordingly, this paper investigates how to estimate, for each received WiFi I/Q vector $\mathbf{x}_0$, the miscoverage level $\hat{\alpha}\left(\mathbf{x}_0,\mathcal{D}\right)$ of the corresponding generalized top-$K$ set $\mathcal{C}(\mathbf{x}_0;\lambda_0^{\star})$ for NBI detection, while ensuring the reliability guarantee in~\eqref{eq:cvg_guar_e}.

\section{Backward Conformal Prediction}
\label{sec:bcp_proc}
In this section, we first present a na\"{\i}ve miscoverage estimator (NME) that uses only the detector's confidence values. We then introduce the BCP-based estimator~\cite{gauthier2025backward}, which provably satisfies~\eqref{eq:cvg_guar_e}. The overall procedure is illustrated in Fig.~\ref{fig:bcp_proc}.

\subsection{Na\"{\i}ve miscoverage estimate}
\label{subsec:naive}
 
A straightforward estimate of the miscoverage level can be obtained directly from the detector's own confidence value $p(y|\mathbf{x}_0)$, without any calibration data, by computing the  probability mass that falls outside the prediction set $\mathcal{C}(\mathbf{x}_0;\lambda_0^{\star})$. The resulting NME is given by
\begin{align}
    \hat{\alpha}^{\mathrm{NME}}(\mathbf{x}_0)
    =
    1-\sum_{y \in \mathcal{Y}} p(y|\mathbf{x}_0)\,
    \mathds{1}\bigl(p(y|\mathbf{x}_0)>\lambda_0^{\star}\bigr).
    \label{eq:naive_alpha_discrete}
\end{align}
The NME is generally not guaranteed to satisfy~\eqref{eq:cvg_guar_e}, because it relies entirely on the detector's own confidence values, which are often poorly calibrated~\cite{guo2017calibration,zecchin2023robust}. In particular, when the model is overconfident, the NME tends to underestimate the true miscoverage probability.

%-------------------------------------------------------------

\subsection{Backward Conformal Prediction}
\label{subsec:bcp}

To quantify how well a label $y$ conforms to the detector output, we define the nonconformity (NC) score
\begin{align}
\label{eq:nc_score_power}
s(\mathbf{x},y)=\frac{1}{p(y|\mathbf{x})^{\beta}},
\end{align}
where $\beta>0$ is a hyperparameter. A larger score indicates a less likely label, since $s(\mathbf{x},y)$ is monotonically decreasing in $p(y|\mathbf{x})$. Other non-increasing functions of $p(y|\mathbf{x})$ could also be used.

For a test input $\mathbf{x}_0$ and a candidate label $y$, BCP forms the e-variable~\cite{balinsky2024enhancing,gauthier2025backward}
\begin{align}
    E(\mathbf{x}_0,y) =
    \frac{s(\mathbf{x}_0,y)}
    {\frac{1}{N_{\mathrm{cal}}+1}\left(\sum_{i=1}^{N_{\mathrm{cal}}} s(\mathbf{x}_i,y_i)+s(\mathbf{x}_0,y)\right)}.
    \label{eq:e_value}
\end{align}
This quantity is the ratio between the test score $s(\mathbf{x}_0,y)$ and the average score formed by combining this test score with the calibration scores $\{s(\mathbf{x}_i,y_i)\}_{i=1}^{N_{\mathrm{cal}}}$. We refer to~\cite{chugg2026values} for a general definition of e-variables and a discussion of their use as test statistics. Since $E(\mathbf{x}_0,y)$ is increasing in $s(\mathbf{x}_0,y)$ and $s(\mathbf{x}_0,y)$ is decreasing in $p(y|\mathbf{x}_0)$, a less likely candidate label leads to a larger e-value.

We first characterize the threshold used by the generalized top-$K$ prediction set for the test input $\mathbf{x}_0$.
\begin{proposition}
\label{prop:threshold_selection}
For the test input $\mathbf{x}_0$, the threshold $\lambda_0^{\star}$ selected by~\eqref{eq:threshold} is
\begin{equation}
\label{eq:lambda0_star}
    \lambda_0^{\star}
    =
    p\left(y^{(C_{\max}(\mathbf{x}_0)+1)}|\mathbf{x}_0\right).
\end{equation}
\end{proposition}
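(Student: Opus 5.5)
The plan is to show two things: the threshold $\lambda=p(y^{(C+1)}|\mathbf{x}_0)$, with $C=C_{\max}(\mathbf{x}_0)$, is feasible for~\eqref{eq:threshold}, and no smaller $\lambda$ is feasible. The key observation is that $\lambda\mapsto|\mathcal{C}(\mathbf{x}_0;\lambda)|$ is a non-increasing step function. Because the inequality in~\eqref{eq:naive_threshold_set} is strict, it is right-continuous, and it counts the labels whose probability strictly exceeds $\lambda$. Since the ordering $p(y^{(1)}|\mathbf{x}_0)\ge\cdots\ge p(y^{(S+2)}|\mathbf{x}_0)$ is fixed, the set $\mathcal{C}(\mathbf{x}_0;\lambda)$ is always a prefix $\{y^{(1)},\dots,y^{(m)}\}$ of this ordering (up to ties). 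This lets the budget constraint, which is stated in terms of prefixes in~\eqref{eq:Cmax_intf}, be compared directly with the set size.

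\textbf{Feasibility.} Set $\lambda=p(y^{(C+1)}|\mathbf{x}_0)$. No label with rank $\ge C+1$ has probability strictly greater than $p(y^{(C+1)}|\mathbf{x}_0)$, so $\mathcal{C}(\mathbf{x}_0;\lambda)\subseteq\{y^{(1)},\dots,y^{(C)}\}$. Hence $|\mathcal{C}(\mathbf{x}_0;\lambda)|\le C$, and the constraint in~\eqref{eq:threshold} holds.

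\textbf{Minimality.} Take any $\lambda'<p(y^{(C+1)}|\mathbf{x}_0)$. Then all of $y^{(1)},\dots,y^{(C+1)}$ satisfy $p>\lambda'$, so $|\mathcal{C}(\mathbf{x}_0;\lambda')|\ge C+1>C$ and $\lambda'$ is infeasible. Combined with feasibility, this makes $p(y^{(C+1)}|\mathbf{x}_0)$ the minimum, so the arg min in~\eqref{eq:threshold} is attained there. This gives~\eqref{eq:lambda0_star}.

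\textbf{Edge cases and conventions.} These are the main obstacle, since the argument itself is elementary.

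First, if $C=S+2$, the label $y^{(C+1)}$ does not exist. I would adopt the convention $p(y^{(S+3)}|\mathbf{x}_0)=0$. Then $\lambda=0$ and $\mathcal{C}$ contains every label with positive probability, which is consistent with the formula even though $\lambda$ lies outside $(0,1)$.

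Second, ties matter. If $p(y^{(C)}|\mathbf{x}_0)=p(y^{(C+1)}|\mathbf{x}_0)$, the realized set has fewer than $C$ elements. Feasibility and minimality still hold, because the argument only uses the values $p(y^{(C+1)}|\mathbf{x}_0)$ and strict inequalities. It is worth remarking that the budget in~\eqref{eq:Cmax_intf} is expressed through the sorted prefix, and the proof above uses $C_{\max}$ only as a number. So the claim holds whatever tie-breaking is used in defining $C_{\max}$, provided the same ordering is used on both sides.
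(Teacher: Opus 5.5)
Your proof is correct and follows the paper's approach: the paper also identifies $y^{(C_{\max}(\mathbf{x}_0)+1)}$ as the first label excluded by the budget and places the strict threshold at its probability. Your feasibility/minimality split makes rigorous what the paper's two-line sketch leaves implicit, in particular that any smaller $\lambda$ admits $C_{\max}(\mathbf{x}_0)+1$ labels, and that ties (where the set has fewer than $C_{\max}(\mathbf{x}_0)$ elements) and the case $C_{\max}(\mathbf{x}_0)=S+2$ need handling.
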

\begin{proof}
By~\eqref{eq:Cmax_intf}, the budget $K$ retains the first $C_{\max}(\mathbf{x}_0)$ labels ordered by $p(y|\mathbf{x}_0)$, so the first excluded label is $y^{(C_{\max}(\mathbf{x}_0)+1)}$. Since~\eqref{eq:naive_threshold_set} keeps labels above the threshold, this boundary gives~\eqref{eq:lambda0_star}.
\end{proof}

The following proposition applies the e-variable construction in \eqref{eq:e_value} to the generalized top-$K$ prediction set $\mathcal{C}(\mathbf{x}_0;\lambda_0^{\star})$, gives the resulting BCP miscoverage estimate in closed form, and proves its reliability.

\begin{proposition}[Closed-Form BCP miscoverage estimate]
\label{prop:bcp_closed_form_validity}
For the generalized top-$K$ prediction set $\mathcal{C}(\mathbf{x}_0;\lambda_0^{\star})$, the BCP miscoverage estimate is
\begin{align}
\label{eq:alpha_hat_closed_form}
\hat{\alpha}^{\mathrm{BCP}}(\mathbf{x}_0,\mathcal{D})
=
\frac{1}{E\left(\mathbf{x}_0,y^{(C_{\max}(\mathbf{x}_0)+1)}\right)},
\end{align}
which satisfies the reliability condition
\begin{align}
\label{eq:post_hoc}
    \mathbb{E}\left[
    \frac{\mathds{1}\bigl(y_0\notin \mathcal{C}(\mathbf{x}_0;\lambda_0^{\star})\bigr)}
    {\hat{\alpha}^{\mathrm{BCP}}(\mathbf{x}_0,\mathcal{D})}
    \right]
    \le 1.
\end{align}

\end{proposition}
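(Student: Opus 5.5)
The proof has two parts. First we derive the closed form from the general BCP construction. Then we establish reliability by reducing it to the e-variable property of $E(\mathbf{x}_0,y_0)$, using exchangeability of the scores.

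\textbf{Closed form.} In BCP, the miscoverage estimate for a set $\mathcal{C}$ is the smallest $\alpha$ such that $\mathcal{C}=\{y: E(\mathbf{x}_0,y)<1/\alpha\}$. Equivalently, $1/\hat{\alpha}$ is the smallest e-value among the excluded labels:
\begin{equation*}
\frac{1}{\hat{\alpha}}=\min_{y\notin\mathcal{C}(\mathbf{x}_0;\lambda_0^{\star})}E(\mathbf{x}_0,y).
\end{equation*}
The map $y\mapsto E(\mathbf{x}_0,y)$ is non-increasing in $p(y|\mathbf{x}_0)$. This holds because $s$ is decreasing in $p$, and $E$ has the form $t\mapsto (N_{\mathrm{cal}}+1)t/(A+t)$ with $A=\sum_{i} s(\mathbf{x}_i,y_i)\ge 0$, which is increasing in $t$. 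By Proposition~\ref{prop:threshold_selection}, the excluded labels are exactly $y^{(m)}$ for $m>C_{\max}(\mathbf{x}_0)$, up to ties at the threshold. Among these, the one with the largest probability, $y^{(C_{\max}(\mathbf{x}_0)+1)}$, attains the minimal e-value. This gives~\eqref{eq:alpha_hat_closed_form}.

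\textbf{Reliability.} On the miscoverage event, $y_0$ is one of the excluded labels. By the monotonicity above, $E(\mathbf{x}_0,y_0)\ge E(\mathbf{x}_0,y^{(C_{\max}(\mathbf{x}_0)+1)})=1/\hat{\alpha}^{\mathrm{BCP}}$. Therefore, pointwise,
\begin{equation*}
\frac{\mathds{1}\bigl(y_0\notin \mathcal{C}(\mathbf{x}_0;\lambda_0^{\star})\bigr)}{\hat{\alpha}^{\mathrm{BCP}}(\mathbf{x}_0,\mathcal{D})}\le E(\mathbf{x}_0,y_0).
\end{equation*}
The left side is zero off the event and $E\ge 0$, so the bound holds everywhere. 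It then remains to show $\mathbb{E}[E(\mathbf{x}_0,y_0)]\le 1$. Write $S_i=s(\mathbf{x}_i,y_i)$ for $i=0,\dots,N_{\mathrm{cal}}$. These scores are exchangeable because the data are i.i.d.\ and the score depends only on the pre-trained model. We have $E(\mathbf{x}_0,y_0)=(N_{\mathrm{cal}}+1)S_0/\sum_{j=0}^{N_{\mathrm{cal}}}S_j$. By exchangeability, $\mathbb{E}[S_0/\sum_j S_j]=\mathbb{E}[S_i/\sum_j S_j]$ for every $i$. Summing over $i$ gives $(N_{\mathrm{cal}}+1)\,\mathbb{E}[S_0/\sum_jS_j]=1$, so $\mathbb{E}[E(\mathbf{x}_0,y_0)]=1$. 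Since $s>0$ whenever $p>0$, the denominator is positive; if $p(y|\mathbf{x})=0$ we take $s=\infty$, which needs a convention or the assumption $p>0$.

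The main obstacle is handling ties at the threshold $\lambda_0^{\star}$ in the closed-form step. Because~\eqref{eq:naive_threshold_set} uses a strict inequality, a label tied with $y^{(C_{\max}+1)}$ may be excluded even if its rank is at most $C_{\max}$. This does not hurt the argument, since any excluded label still satisfies $p(y|\mathbf{x}_0)\le\lambda_0^{\star}$ and hence $E(\mathbf{x}_0,y)\ge E(\mathbf{x}_0,y^{(C_{\max}+1)})$. The pointwise inequality, and therefore the guarantee, remains valid.
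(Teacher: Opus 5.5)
Your proposal is correct and follows the paper's route. Monotonicity of $E(\mathbf{x}_0,\cdot)$ in $p(\cdot|\mathbf{x}_0)$ identifies $1/\hat{\alpha}^{\mathrm{BCP}}$ with the e-value of the first excluded label $y^{(C_{\max}(\mathbf{x}_0)+1)}$, and reliability reduces to the e-variable property of $E(\mathbf{x}_0,y_0)$; your set-matching characterization of the estimate gives the same value here as the paper's size-constrained infimum. The only real difference is that the paper cites post-hoc validity, whereas you prove it inline via the pointwise bound $\mathds{1}(y_0\notin\mathcal{C}(\mathbf{x}_0;\lambda_0^{\star}))/\hat{\alpha}^{\mathrm{BCP}}(\mathbf{x}_0,\mathcal{D})\le E(\mathbf{x}_0,y_0)$ and the exchangeability argument giving $\mathbb{E}[E(\mathbf{x}_0,y_0)]=1$, which makes the argument self-contained and shows that only the tie-robust one-directional implication is needed.
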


\begin{proof}
Since $E(\mathbf{x}_0,y)$ is decreasing in $p(y|\mathbf{x}_0)$, the probability ordering $p(y^{(1)}|\mathbf{x})\ge p(y^{(2)}|\mathbf{x})\ge \cdots\ge p(y^{(S+2)}|\mathbf{x})$ is equivalent to the e-value ordering $E(\mathbf{x}_0,y^{(1)})\le E(\mathbf{x}_0,y^{(2)})\le  \cdots \le E(\mathbf{x}_0,y^{(S+2)})$. Therefore, the prediction set $\mathcal{C}(\mathbf{x}_0;\lambda_0^{\star})$ of the form~\eqref{eq:naive_threshold_set}, with $\lambda_0^{\star}$ selected according to~\eqref{eq:lambda0_star}, can be equivalently expressed as
\begin{equation}
    \mathcal{C}(\mathbf{x}_0;\lambda_0^{\star})
    =
    \left\{
    y\in\mathcal{Y}:
    E(\mathbf{x}_0,y)
    <
    E\left(\mathbf{x}_0,y^{(C_{\max}(\mathbf{x}_0)+1)}\right)
    \right\}.
    \label{eq:set_e_value}
\end{equation}
Following~\cite{gauthier2025backward,gauthier2025values}, $\hat{\alpha}^{\mathrm{BCP}}(\mathbf{x}_0,\mathcal{D})$ is implicitly defined as 
\begin{equation}
\label{eq:alpha_hat_implicit}
\begin{aligned}
\hat{\alpha}^{\mathrm{BCP}}\!(\mathbf{x}_0,\mathcal{D})
\!=\!
\inf_{\alpha >0}
&\; \alpha \\
\mathrm{s.t.}
&\;
\left|\left\{y\in\mathcal{Y}\!:\!E(\mathbf{x}_0,y)\!<\!\frac{1}{\alpha}\right\}\right|
\!\le\! C_{\max}(\mathbf{x}_0).
\end{aligned}
\end{equation}
Using the threshold e-value $E\left(\mathbf{x}_0,y^{(C_{\max}(\mathbf{x}_0)+1)}\right)$ in~\eqref{eq:set_e_value}, the infimum in~\eqref{eq:alpha_hat_implicit} is obtained when
\begin{equation}
    \frac{1}{\hat{\alpha}^{\mathrm{BCP}}(\mathbf{x}_0,\mathcal{D})}
    =
    {E\left(\mathbf{x}_0,y^{(C_{\max}(\mathbf{x}_0)+1)}\right)},
\end{equation}
which gives~\eqref{eq:alpha_hat_closed_form}.

Therefore, we have the equivalence relationship
\begin{equation}
    y_0\notin\mathcal{C}(\mathbf{x}_0;\lambda_0^{\star})
    \Longleftrightarrow
    E(\mathbf{x}_0,y_0)
    \ge
    \frac{1}{\hat{\alpha}^{\mathrm{BCP}}(\mathbf{x}_0,\mathcal{D})}.
\end{equation}
Let $\tilde{\alpha}>0$ denote any data-dependent miscoverage level that may depend on the calibration data $\mathcal{D}$ and the test input $\mathbf{x}_0$. Then, the e-variable $E(\mathbf{x}_0,y_0)$ satisfies the post-hoc validity property~\cite{koning2023post}
\begin{equation}
\label{eq:post_hoc_e}
    \mathbb{E}\left[
\frac{\mathds{1}\left(E(\mathbf{x}_0,y_0)\ge 1/\tilde{\alpha}\right)}
    {\tilde{\alpha}}
    \right]\le 1 .
\end{equation}
Substituting $\tilde{\alpha}=\hat{\alpha}^{\mathrm{BCP}}(\mathbf{x}_0,\mathcal{D})$ into~\eqref{eq:post_hoc_e} yields~\eqref{eq:post_hoc}, completing the proof.

\end{proof}

Unlike the original BCP framework~\cite{gauthier2025backward,gauthier2025values}, where the per-input miscoverage probability is defined implicitly through the size constraint, we derive the estimate $\hat{\alpha}^{\mathrm{BCP}}(\mathbf{x}_0,\mathcal{D})$ in closed form for the generalized top-$K$ set $\mathcal{C}(\mathbf{x}_0;\lambda_0^{\star})$ and prove its reliability in the sense of~\eqref{eq:cvg_guar_e}. 
This guarantee ensures a conservative miscoverage estimate for downstream mitigation.

\section{Numerical Results}
\label{sec:numerical_results}

\subsection{Simulation Setup}
\label{subsec:sim_setup}
 
We evaluate the proposed framework on the NBI detection task described in Sec.~\ref{subsec:sys_model}. Following~\cite{robinson2023narrowband}, the WiFi signal occupies a $20$~MHz channel and the narrowband interferer has a bandwidth of $156$~kHz, with $S=4$ monitored subcarriers. All I/Q data are generated according to the IEEE~802.11a/g standard using the MATLAB WLAN Toolbox~\cite{ieee802112020}. The training dataset contains approximately $120{,}000$ examples per label, with the signal-to-interference ratio (SIR) ranging from $-10$ to $10$~dB. The detector adopts the CNN architecture of~\cite{robinson2023narrowband} and is trained via stochastic gradient descent. For evaluation, we use a held-out dataset with $3{,}000$ examples per label at $\mathrm{SIR}=5$~dB, and we compare the BCP miscoverage estimate (Sec.~\ref{subsec:bcp}) with the na\"{\i}ve baseline (Sec.~\ref{subsec:naive}).

\subsection{Performance Metrics}
\label{subsec:metrics}

Performance is evaluated on a test dataset $\mathcal{D}^{\mathrm{te}}=\{(\mathbf{x}_j,y_j)\}_{j=1}^{N_{\mathrm{te}}}$. For each test sample, let $\hat{\alpha}_j$ denote the miscoverage estimate and $m_j=\mathds{1}(y_j\notin\mathcal{C}(\mathbf{x}_j))$ the true miscoverage indicator. To assess the reliability of the miscoverage estimate, we report the estimated and true miscoverage rates and their difference, defined as
\begin{align}
\label{eq:metric_emr}
\text{Estimated miscoverage rate}
&=
\frac{1}{N_{\mathrm{te}}}\sum_{j=1}^{N_{\mathrm{te}}}\hat{\alpha}_j, \\
\label{eq:metric_tmr}
\text{True miscoverage rate}
&=
\frac{1}{N_{\mathrm{te}}}\sum_{j=1}^{N_{\mathrm{te}}}m_j, \\
\label{eq:metric_smd}
\text{Signed miscoverage difference}
&=
\frac{1}{N_{\mathrm{te}}}\sum_{j=1}^{N_{\mathrm{te}}}(\hat{\alpha}_j-m_j).
\end{align}
A positive signed miscoverage difference indicates that the estimate is conservative, i.e., it overestimates the true miscoverage.

To assess the accuracy of the miscoverage estimate, we report the Brier score
\begin{align}
\label{eq:metric_bs}
    \text{Brier~score}
    =
    \frac{1}{N_{\mathrm{te}}}
    \sum_{j=1}^{N_{\mathrm{te}}}
    \left(\hat{\alpha}_j-m_j\right)^2,
\end{align}
which measures the mean squared error between the estimated and true miscoverage. All metrics are averaged over $N_{\mathrm{run}}=500$ independent experiments, each with an independently drawn calibration--test split $\{\mathcal{D}^{\mathrm{cal}},\mathcal{D}^{\mathrm{te}}\}$.

\subsection{Performance Analysis}
\label{subsec:perf_analysis}

\begin{figure}[!h]
    \centering
    \includegraphics[width=0.95\linewidth]{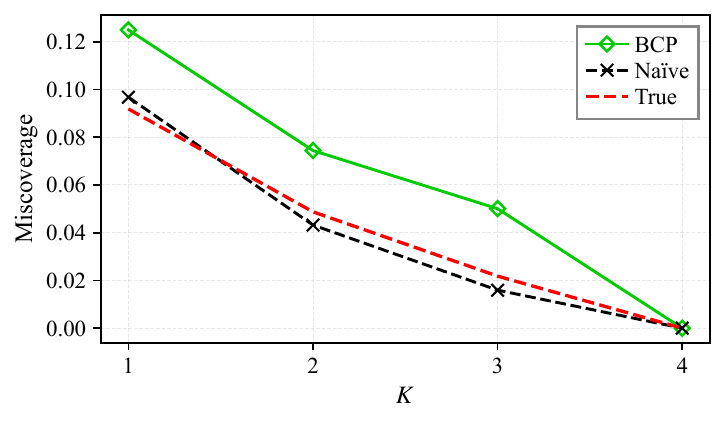}
    \caption{Average estimated and true miscoverage rates~\eqref{eq:metric_emr} and~\eqref{eq:metric_tmr} as a function of the budget $K$. For BCP, the calibration size is fixed to $N_{\mathrm{cal}}=500$.}
    \label{fig:miscvgfunc}
\end{figure}

Fig.~\ref{fig:miscvgfunc} reports the average estimated miscoverage rate~\eqref{eq:metric_emr} and the true miscoverage rate~\eqref{eq:metric_tmr} as a function of the budget $K$, where the estimated miscoverage rate is computed using either the BCP estimate $\hat{\alpha}^{\mathrm{BCP}}$ in~\eqref{eq:alpha_hat_closed_form} or the NME $\hat{\alpha}^{\mathrm{NME}}$ in~\eqref{eq:naive_alpha_discrete}. The BCP-based estimate consistently exceeds the true miscoverage rate for all values of $K$, confirming the conservativeness predicted by~\eqref{eq:cvg_guar}. In contrast, the NME closely tracks the true miscoverage on average, but, as shown next, it can underestimate it across individual experiments.

\begin{figure}[!h]
    \centering
    \includegraphics[width=0.95\linewidth]{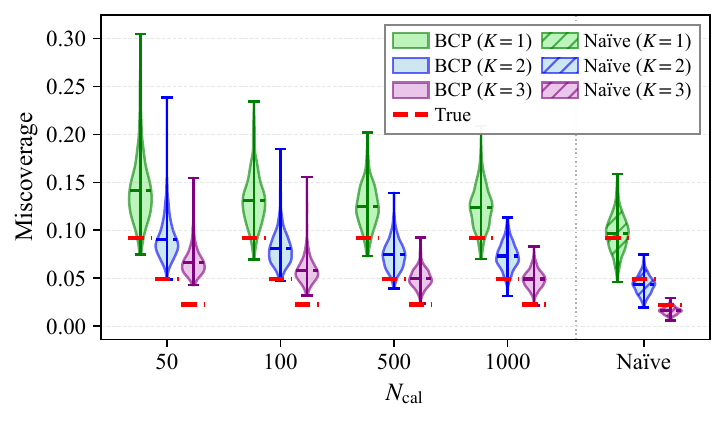}
    \caption{Violin plot of the estimated miscoverage rate~\eqref{eq:metric_emr} versus the calibration size $N_{\mathrm{cal}}$ for budgets $K\in\{1,2,3\}$. Each violin shows the distribution over the $N_{\mathrm{run}}$ experiments, with the dashed line inside the violin indicating the corresponding average. The test size is fixed to $N_{\mathrm{te}}=100$, and the red dashed lines mark the average true miscoverage rates.}
    \label{fig:miscoverage}
\end{figure}

\begin{figure}[!h]
    \centering
    \includegraphics[width=0.95\linewidth]{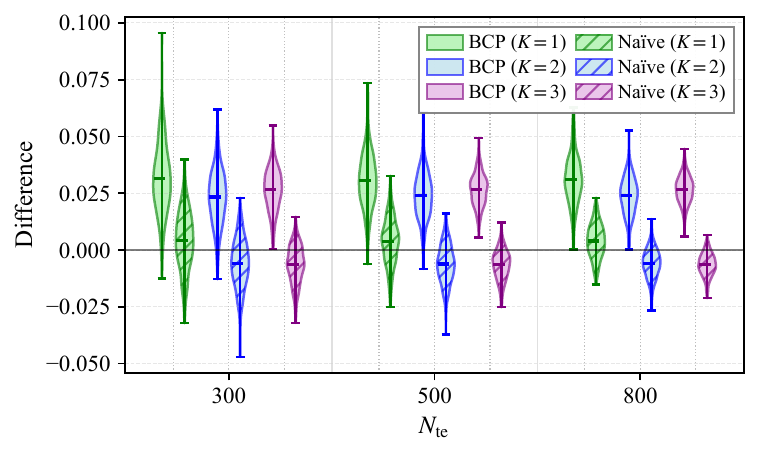}
    \caption{Violin plot of the signed miscoverage difference~\eqref{eq:metric_smd} versus the test size $N_{\mathrm{te}}$ for budgets $K\in\{1,2,3\}$. For BCP, the calibration size is fixed to $N_{\mathrm{cal}}=1000$.}
    \label{fig:alphadiff}
\end{figure}

Fig.~\ref{fig:miscoverage} and Fig.~\ref{fig:alphadiff} provide a more detailed view of the reliability behavior. While Fig.~\ref{fig:miscvgfunc} shows only the average, Fig.~\ref{fig:miscoverage} reveals the distribution of the estimated miscoverage rate across the $N_{\mathrm{run}}$ experiments for varying calibration sizes $N_{\mathrm{cal}}$. Under BCP, the average estimated miscoverage stays above the true miscoverage for all $K$, and the distribution concentrates as $N_{\mathrm{cal}}$ grows. Fig.~\ref{fig:alphadiff} further examines whether individual experiments violate conservativeness: the signed miscoverage difference~\eqref{eq:metric_smd} under BCP is predominantly positive and becomes entirely positive for sufficiently large $N_{\mathrm{te}}$. In contrast, the NME frequently yields negative differences, confirming that it underestimates the true miscoverage in a non-negligible fraction of experiments.

\begin{figure}[!h]
    \centering
    \includegraphics[width=0.95\linewidth]{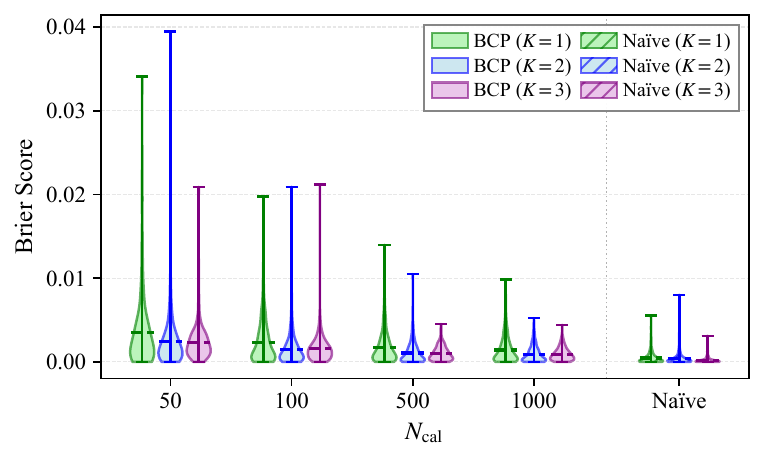}
    \caption{Violin plot of the Brier score~\eqref{eq:metric_bs} versus the calibration size $N_{\mathrm{cal}}$ for budgets $K\in\{1,2,3\}$. The test size is fixed to $N_{\mathrm{te}}=100$.}
    \label{fig:brier}
\end{figure}

Beyond reliability, Fig.~\ref{fig:brier} evaluates the accuracy of the miscoverage estimates in terms of the Brier score. The Brier score of BCP decreases as the calibration size $N_{\mathrm{cal}}$ grows and approaches that of the NME, indicating that the cost of conservativeness vanishes with sufficient calibration data.

% \begin{figure*}[t]
%     \centering
%     \includegraphics[width=\linewidth]{figures/nbi_I64_B4_N4_SIRm10to10_inclusion_matrix_polished_SIR5.pdf}
%     \caption{Class-inclusion matrices for prediction-set sizes $K\in\{1,2,3\}$. The corresponding overall true coverage rates are $0.899$, $0.950$, and $0.977$, respectively. Each matrix is averaged over the held-out dataset.}
%     \label{fig:inclusion_matrix}
% \end{figure*}

% Finally, to provide further insight into the prediction behavior, Fig.~\ref{fig:inclusion_matrix} reports the class-inclusion matrix, whose $(k,\ell)$-th entry gives the empirical frequency that class $\ell$ is included in the prediction set when the true class is $k$:
% \begin{equation}
%     M_{k,\ell}
%     =
%     \frac{
%     \sum_{i=1}^{N_{\mathrm{run}}}\sum_{j=1}^{N_{\mathrm{te}}}
%     \mathds{1}(y_{i,j}=k)\,\mathds{1}(\ell\in\mathcal{C}(\mathbf{x}_{i,j}))
%     }{
%     \sum_{i=1}^{N_{\mathrm{run}}}\sum_{j=1}^{N_{\mathrm{te}}}
%     \mathds{1}(y_{i,j}=k)
%     }.
% \end{equation}
% The interference-location classes exhibit noticeably lower diagonal entries than the no-transmission and WiFi-only classes, indicating that the dominant source of miscoverage is the failure to localize the interfered subcarrier rather than the failure to detect interference itself. This observation underlines the importance of reliable miscoverage estimation for budget-constrained NBI detection.

\section{Conclusion}
\label{sec:conclusion}
This paper has proposed a BCP-based framework for budget-constrained narrowband interference detection in WiFi systems. Operationally constrained prediction sets are first constructed over interference states from the output of a probabilistic detector, with their size capped by a limited mitigation budget. BCP is then used to assess, for each input, the risk that the selected set fails to contain the true interference state, while preserving distribution-free reliability guarantees. Numerical results have validated both the reliability and the accuracy of the proposed method, showing that it consistently yields conservative miscoverage estimates whose Brier-score gap to a na\"{\i}ve baseline shrinks as the calibration size grows. Future work may consider extensions to broader wireless tasks and to stronger reliability guarantees beyond the marginal setting considered here.

\bibliographystyle{IEEEtran}
\bibliography{./bibtex/bib/IEEEabrv,./bibtex/bib/IEEEreference}

\end{document}